 \newtheorem{theorem}{Theorem}[section]
 \newtheorem{lemma}{Lemma}[section]
 \newtheorem{definition}{Definition}[section]
\newcommand{\OPT}{{\textnormal{OPT}}}
\theoremstyle{definition}
\newtheorem{algz}{Algorithm}[section]
\theoremstyle{plain}
\newcounter{algleo}[algz]
\newlength{\lefttab}
\newlength{\numberoffset}
\newenvironment{algleo}%
  {\trivlist
   \topsep=0pt\parsep=0pt\itemsep=0pt
   \def\li{\item\refstepcounter{algleo}\makebox[0.8em][r]{\thealgleo\hspace{\numberoffset}}
       \hangafter1\hangindent1.8em\noindent}%
   \def\linonumber{\item\makebox[0.8em][r]{\hspace{\numberoffset}}
       \hangafter1\hangindent1.8em\noindent}%
   \def\linooffset{\item\hangafter1\hangindent1em\noindent}%
   \addtolength{\lefttab}{1.25em}
   \addtolength{\numberoffset}{1.25em}
   \leftskip=\lefttab}%
  {\endtrivlist}
\begin{document}

\title{Approximating the Weighted Minimum Label $s$-$t$ Cut Problem}

\author{
Peng Zhang
  \thanks{School of Software,
    Shandong University, Jinan, Shandong, 250101, China.
    E-mail: {\tt algzhang@sdu.edu.cn}.
    ORCID: {\tt 0000-0001-7265-0983}.}
}

\maketitle

\begin{abstract}
%% Text of abstract
In the weighted (minimum) {\sf Label $s$-$t$ Cut} problem, we are given
a (directed or undirected) graph $G=(V,E)$,
a label set $L = \{\ell_1, \ell_2, \dots, \ell_q \}$
with positive label weights $\{w_\ell\}$, a source $s \in V$ and
a sink $t \in V$.
Each edge edge $e$ of $G$ has a label $\ell(e)$ from $L$.
Different edges may have the same label.
The problem asks to find a minimum weight label subset $L'$
such that the removal of all edges with labels in $L'$ disconnects $s$ and $t$.

The unweighted {\sf Label $s$-$t$ Cut} problem
(i.e., every label has a unit weight) can be approximated within $O(n^{2/3})$,
where $n$ is the number of vertices of graph $G$.
However, it is unknown for a long time how to approximate the weighted
{\sf Label $s$-$t$ Cut} problem within $o(n)$. In this paper, we provide
an approximation algorithm for the weighted {\sf Label $s$-$t$ Cut}
problem with ratio $O(n^{2/3})$. The key point of the algorithm is a mechanism
to interpret label weight on an edge as both its length and capacity.
\end{abstract}

%\begin{keyword}
%%% keywords here, in the form: keyword \sep keyword
%Label Cut \sep Minimum Cut \sep Linear Programming \sep Integrality Gap
%\sep Approximation Algorithm
%
%%% MSC codes here, in the form: \MSC code \sep code
%%% or \MSC[2008] code \sep code (2000 is the default)
%\MSC 68W25 \sep 90C27
%%% 68W25 -- approximation algorithms
%%% 90C27 -- combinatorial optimization
%\end{keyword}

\section{Introduction}
The (minimum) {\sf Label $s$-$t$ Cut} problem
is a natural combinatorial optimization problem arising
from several application fields, including system security
\cite{JSW02,SHJ+02,SW04}, computer networks \cite{CDP+07,CPRV16},
and control engineering \cite{RKK06}.
The problem is defined as in Definition \ref{def - label s-t cut}.
{\sf Label $s$-$t$ Cut} has attracted much attention from researchers
in computer science
(see, e.g., \cite{BCM+16,CJPP17,CPRV16,FGK10,GKP17,GMT15,GCSR13,JB17})
and researchers even in chemical engineering (see \cite{FWXG17}).
The problem finds many applications in image segmentation \cite{KOJ13},
network connectivity \cite{ZM18}, computational biology \cite{OSC11},
and network security \cite{BS17c}, etc.

\begin{definition}
\label{def - label s-t cut}
The minimum {\sf Label $s$-$t$ Cut} problem.

{\em (Instance)} We are given a (directed or undirected) graph $G=(V,E)$,
a label set $L = \{\ell_1, \ell_2, \dots, \ell_q \}$,
a source $s \in V$ and a sink $t \in V$.
Each edge edge $e$ of $G$ has a label $\ell(e)$ from $L$.
Different edges may have the same label.

{\em (Goal)} The problem asks to find a minimum size label subset $L'$
such that the removal of all edges with labels in $L'$ disconnects $s$ and $t$.
\end{definition}

In the weighted {\sf Label $s$-$t$ Cut} problem, a positive label weight
$w_\ell$ is provided further for every label $\ell \in L$.
The problem asks to find a minimum {\em weight} label subset $L'$
to disconnect $s$ and $t$, where the weight $w(L)'$ of label subset $L'$
is defined as $w(L') = \sum_{\ell \in L} w_\ell$.

We briefly introduce some origins and applications of the {\sf Label $s$-$t$ Cut}
problem. The detailed description can be found in \cite{CDP+07}, \cite{JSW02},
and related references therein.
{\sf Label $s$-$t$ Cut} comes from an interesting scenario in system security.
An intruder with some attack methods lies in his beginning state,
planing intrude a system. The intruder's current state changes once he applies
one of the attack methods. The successful state means that the intruder has
intruded the system successfully. The state transition of the intruder
can be modeled by an edge-labeled graph with source $s$ (representing
the beginning state) and sink $t$ (representing the successful state).
The system defender can disable the intruder's attack methods by paying some costs,
e.g., by strengthening the defending equipments.
The system defender's task can be depicted as finding the minimum cost of labels
such that $s$ and $t$ are disconnected in the graph after removing the edges
with these labels. This is precisely the {\sf Label $s$-$t$ Cut} problem.

As an application scenario, let us consider computer networks, which are usually
multi-layered. For simplicity, we assume that there are two layers in a network,
namely, the high level logical layer and the low level physical layer.
Each link in the logical layer corresponds to a path in the physical layer.
Since many logical links may rely on paths that have some common physical link,
a failure of some physical link may result in a failure of many seemingly
unrelated logical links. This can be modeled by an edge-labeled graph.
More importantly, {\sf Label $s$-$t$ Cut} can be used to compute
the generalized $s$-$t$ connectivity (counts in labels) in edge-labeled graphs.

\medskip

{\bf Notation.}
Let $G$ be a graph in the context. We always use $n$ to denote the number
of vertices of $G$, and use $m$ to denote the number of edges of $G$.
Moreover, we use $\OPT$ to denote the optimal solution value of an optimization
problem. Note that for the {\sf Label $s$-$t$ Cut} problem, we still have
a natural parameter $q$, which is the number of labels in the problem input.
For simplicity, we will use $\ell$ to denote a label in $L$.
Let $e$ be an edge. We also use $\ell(e)$ to denote the label on edge $e$.
Note that in the latter case $\ell$ is actually a function from $E$ to $L$.
The two cases of usage of $\ell$ will be easily distinguished from the context.
The same thing happens to symbol $L$. On the one hand, $L$ is the set of labels
in the problem input. On the other hand, given an edge subset $E'$, we will
use $L(E')$ to denote the set of labels appearing on edges in $E'$.

\subsection{Related Work}
It is easy to see that {\sf Label $s$-$t$ Cut} is just a generalization
of the classic {\sf Min $s$-$t$ Cut} problem. If in {\sf Label $s$-$t$ Cut}
every edge has a unique label, then {\sf Label $s$-$t$ Cut} degenerates
to {\sf Min $s$-$t$ Cut}. While it is well-known that {\sf Min $s$-$t$ Cut}
is polynomial-time solvable, {\sf Label $s$-$t$ Cut} is NP-hard
and has high approximation hardness factor $2^{(\log n)^{1-1/(\log\log n)^c}}$
for any constant $c < 1/2$ \cite{ZCTZ11}.
For the applications of {\sf Label $s$-$t$ Cut} in system security
\cite{JSW02,SHJ+02,SW04} and computer networks \cite{CDP+07,CPRV16},
the readers are advised to refer to the corresponding references.

The first non-trivial approximation algorithm for the weighted
{\sf Label $s$-$t$ Cut} problem was given by Zhang et al. \cite{ZCTZ11}
in 2011 with performance ratio $O(m^{1/2})$, where $m$ is the number of edges
in the input graph. However, in dense graphs with $m = \Theta(n^2)$,
the approximation ratio $O(m^{1/2})$ degenerates to $O(n)$.
Therefore, an approximation algorithm with ratio in terms of $n$
(better than $O(n)$) is expected for weighted {\sf Label $s$-$t$ Cut}.

Rohloff et al. \cite{RKK06} gave an $O(n^{2/3})$-approximation algorithm
for unweighted {\sf Label $s$-$t$ Cut}.
Tang et al. \cite{TZ12} independently gave
an $O(\frac{n^{2/3}}{\OPT^{1/3}})$-approximation algorithm
for the unweighted {\sf Label $s$-$t$ Cut} problem
via a mixed strategy of LP-rounding and finding a min $s$-$t$ cut,
where $\OPT$ denotes the value of an optimal solution to the problem.
Thereafter, Zhang et al. \cite{ZFT18} rendered the algorithm in \cite{TZ12}
purely combinatorial, keeping the approximation ratio $O(\frac{n^{2/3}}{\OPT^{1/3}})$
unchanged. Dutta et al. \cite{DHKM16} also gave an $O(n^{2/3})$-approximation
algorithm for unweighted {\sf Label $s$-$t$ Cut}.

However, all the algorithms in \cite{RKK06}, \cite{TZ12}, \cite{ZFT18},
and \cite{DHKM16} only deal with the {\em unweighted} {\sf Label $s$-$t$ Cut} problem.
To the best of our knowledge, there is no approximation algorithm
for the weighted {\sf Label $s$-$t$ Cut} problem with performance ratio
directly given in terms of $n$ (better than $O(n)$).
Note that for an optimization problem, the complexity of the unweighted version
and the weighted version could be very different. For example,
the well-known {\sf $k$-MST} problem in weighted graphs is NP-hard \cite{RSM+96}.
However, in unweighted graphs {\sf $k$-MST} is polynomial time solvable.

\subsection{Our Results}
In this paper, we provide an approximation algorithm for the weighted
{\sf Label $s$-$t$ Cut} problem with ratio $O(n^{2/3})$. This is the first
approximation algorithm for weighted {\sf Label $s$-$t$ Cut} whose ratio
is given in terms of $n$. The key point of the algorithm is a mechanism
to interpret label weight on an edge as both the edge's length and capacity.

The overall strategy is to discretize label weights $w_\ell$ with arbitrary values
into integer values ${\bar w}_\ell$ in a range from one to an upper bound
polynomial in $q$.
Recall that $q$ is the number of labels in the problem input.
Every edge $e$ in the input graph $G$ is transformed into $\bar{w}_{\ell(e)}$
parallel edges, resulting in a multi-graph $\tilde G$.
Meanwhile, every label $\ell \in L$ is replaced
by a group of labels $\ell^{(1)}, \ell^{(2)}, \dots, \ell^{(\bar{w}_\ell)}$,
resulting in a new label set $\tilde{L}$. In this way, we get
a new {\sf Label $s$-$t$ Cut} instance $\tilde{\cal I}$ on multi-graphs.
Since $\bar{w}_{\ell}$ is polynomially bounded, instance $\tilde{\cal I}$
can be constructed in polynomial time.
Next, we develop a two-stage combinatorial algorithm
for {\sf Label $s$-$t$ Cut} on multi-graphs to compute an approximate
solution $\tilde{L}' \subseteq \tilde{L}$ to instance $\tilde{\cal I}$.
Finally, a solution $L' \subseteq L$ to the original instance
is recovered from $\tilde{L}'$.

\subsection{More Related Work}
The {\sf Label $s$-$t$ Cut} problem was also studied from the parameterized
perspective. Fellows et al. \cite{FGK10} showed that
when parameterized by the number of used labels,
the {\sf Label $s$-$t$ Cut} problem is $W[2]$-hard even in graphs whose
path-width is bounded above by a small constant.
On the other hand, Zhang et al. \cite{ZF16} showed that when parameterized
by the number of used labels, {\sf Label $s$-$t$ Cut} is fixed-parameter tractable
if the maximum length of any $s$-$t$ path is bounded from above.
Morawietz et al. \cite{MGKS20} considered the parameterized complexity of
{\sf Label $s$-$t$ Cut} with parameters related to the structure of
the input graph and labels.

Jegelka et al. \cite{JB17} studied a more general cut problem called
{\sf Cooperative $s$-$t$ Cut}, which finds an $s$-$t$ cut such that
an objective function is minimized, where the objective function
can be arbitrary submodular function defined on $2^E$.
It is not difficult to see that {\sf Cooperative $s$-$t$ Cut} is
a generalization of {\sf Label $s$-$t$ Cut}.
Jegelka et al. \cite{JB17} proved {\sf Cooperative $s$-$t$ Cut}
cannot be approximated within $\sqrt{n \log n}$.
It is interesting to note that this lower bound is based on
information theory \cite{SF11}, instead of complexity assumptions
such as $\text{P} \neq \text{NP}$.

We would like to introduce here the {\sf Global Label Cut} problem
\cite{Far06,CDP+07,TZ12,GKP17},
which is closely related to {\sf Label $s$-$t$ Cut}.
In the {\sf Global Label Cut} problem, there are no given vertices $s$ and $t$.
The problem asks to find a minimum size (or weight) label subset
$L' \subseteq L$ such that the removal of edges whose labels are in $L'$
makes the undirected input graph disconnected.
It is easy to see that {\sf Global Label Cut} is a natural generalization
of the classic global {\sf Min Cut} problem (see, e.g., \cite{KS96}
for {\sf Min Cut}).

To the best of our knowledge, {\sf Global Label Cut} had been proposed
at least in 2006 by Farag{\'o} \cite{Far06}.
% which was called the {\sf Minimum ${\cal F}$-Cut} problem therein.
Zhang et al. \cite{ZF16} showed that in several special cases,
{\sf Global Label Cut} is polynomial-time solvable.
An important progress was made by Ghaffari et al. \cite{GKP17},
% (where the problem was called the {\sf Min Hedge Cut} problem),
who gave a quasi-polynomial time Monte-Carlo algorithm for
unweighted {\sf Global Label Cut}, and a PTAS with high probability
for weighted {\sf Global Label Cut}.
Bordini et al. \cite{BPGF19} gave some heuristics for {\sf Global Label Cut}.
At the current time, the most intriguing open problem
is whether {\sf Global Label Cut} is in P or NP-hard.

\bigskip

{\bf Organization of the paper.}
The remainder of the paper is organized as follows.
In Section \ref{sec - unweighted LstC}, we give a simple analysis
to the existing algorithm for the unweighted {\sf Label $s$-$t$ Cut} problem.
In Section \ref{sec - high-level idea}, we exhibit our high-level idea
about how to approximate weighted {\sf Label $s$-$t$ Cut}.
In Section \ref{sec - LstC on multi-graphs with forbidden labels}, we show
how to approximate unweighted {\sf Label $s$-$t$ Cut} on multi-graphs.
Then, in Section \ref{sec - weighted LstC} we show how to reduce
the weighted {\sf Label $s$-$t$ Cut} problem to the unweighted one on multi-graphs.
Finally, in Section \ref{sec - conclusions} we conclude the paper.

\section{Unweighted Label $s$-$t$ Cut}
\label{sec - unweighted LstC}
It is known that unweighted {\sf Label $s$-$t$ Cut} can be approximated
within $O(\frac{n^{2/3}}{\OPT^{1/3}})$ \cite{ZFT18}. In this section,
we would like to give a simple analysis to this result.

It is known that {\sf Label $s$-$t$ Cut} on unweighted graphs can be
approximated within $O(\frac{n^{2/3}}{\OPT^{1/3}})$ \cite{TZ12,ZFT18}.
In this paper, we give a new, but shorter proof for this result.
To the best of our knowledge, the new proof should be the shortest one
known so far for the $O(\frac{n^{2/3}}{\OPT^{1/3}})$-approximation result.

In 2006, Rohloff et al. \cite{RKK06} gave an $O(n^{2/3})$-approximation
algorithm for unweighted {\sf Label $s$-$t$ Cut}.
The paper \cite{RKK06} deals with the sensor selection problem arising
in the field of control theory and engineering.
The algorithm for {\sf Label $s$-$t$ Cut} in \cite{RKK06} has two stages,
with the first stage repeatedly finding the shortest $s$-$t$ paths,
and the second stage running a depth-first search.

In 2012, Tang et al. \cite{TZ12} independently
gave an $O(\frac{n^{2/3}}{\OPT^{1/3}})$-approximation
algorithm for the unweighted {\sf Label $s$-$t$ Cut} problem.
The algorithm in \cite{TZ12} is also a two-stage algorithm,
with the first stage rounding a fractional solution to
the following linear program (\ref{LP - LP1 for LstC}),
%the linear program
%$\min\{\sum_{\ell \in L} x_\ell \mid
%    \sum_{e \in P} x_{\ell(e)} \geq 1,
%    \forall \text{$s$-$t$ path~} P; x_\ell \geq 0, \forall \ell \in L \}$,
and the second stage finding a min $s$-$t$ cut.

\begin{alignat}{2}
\label{LP - LP1 for LstC}
\min \quad & \sum_{\ell \in L} x_\ell  &{}& \tag{LP1} \\
\mbox{s.t.} \quad
& \nonumber
\sum_{e \in P} x_{\ell(e)} \geq 1,
    &\qquad& \forall \text{$s$-$t$ path~} P \\
\nonumber
& x_\ell \geq 0,
    &{}& \forall \ell \in L
\end{alignat}

Thereafter, Zhang et al. \cite{ZFT18} (in 2018)
rendered the algorithm in \cite{TZ12} purely combinatorial,
keeping the approximation ratio $O(\frac{n^{2/3}}{\OPT^{1/3}})$ unchanged.
More specifically, in the first stage of the algorithm in \cite{ZFT18},
LP-rounding is no longer needed, but a work of repeatedly finding
the shortest $s$-$t$ paths is placed instead.

In 2016, Dutta et al. \cite{DHKM16} also independently
gave an $O(n^{2/3})$-approximation algorithm for {\sf Label $s$-$t$ Cut}.
Their algorithm is still a two-stage
algorithm: LP-rounding and finding a min $s$-$t$ cut.
The linear program used in \cite{DHKM16} is
the following linear program (\ref{LP - LP2 for LstC}),
%$\min\{\sum_{\ell \in L} x_\ell \mid
%    \sum_{\ell \in L(P)} x_{\ell} \geq 1,
%    \forall \text{$s$-$t$ path~} P; x_\ell \geq 0, \forall \ell \in L \}$,
which is different to that in \cite{TZ12}.
\begin{alignat}{2}
\label{LP - LP2 for LstC}
\min \quad & \sum_{\ell \in L} x_\ell  &{}& \tag{LP2} \\
\mbox{s.t.} \quad
& \nonumber
\sum_{\ell \in L(P)} x_{\ell} \geq 1,
    &\qquad& \forall \text{$s$-$t$ path~} P \\
\nonumber
& x_\ell \geq 0,
    &{}& \forall \ell \in L
\end{alignat}
Since the separation problem to the LP in \cite{DHKM16} is NP-hard,
it is not known how to get an optimal solution to this LP in polynomial time.
Dutta et al. \cite{DHKM16} did much work to get an approximate solution
to this LP. However, the LP in \cite{TZ12} actually is enough for the algorithm
in \cite{DHKM16}.

It has mentioned that all the algorithms in \cite{RKK06}, \cite{TZ12}
(including \cite{ZFT18}), and \cite{DHKM16} are two-stage algorithms.
However, the analyses of the second stages in them are different
from each other. Inspired by the analysis of the second stage in \cite{DHKM16},
we give a more simpler analysis to the algorithm in \cite{ZFT18}.
To the best of our knowledge, this should be the simplest analysis to
this algorithm known so far. The algorithm is shown
as Algorithm \ref{alg - unweighted LstC - n}, which is an approximation
algorithm for the unweighted {\sf Label $s$-$t$ Cut} problem
on (directed or undirected) simple graphs.

\begin{algz}[\cite{ZFT18}]
\label{alg - unweighted LstC - n}
\setcounter{algleo}{0}
\begin{algleo}
\linonumber
\linooffset {\em Input:} An instance ${\cal I} = (G, s, t, L)$ of
    {\sf Label $s$-$t$ Cut}.
\linooffset {\em Output:} A label subset of $L$.
\li \label{step - guess OPT}
    Guess $\OPT$.
\linonumber /* Stage one */
\li $L_1 \leftarrow \emptyset$.
\li {\bf while} {the $s$-$t$ distance
    $\leq \frac{n^{2/3}}{\OPT^{1/3}}$} {\bf do}
    \begin{algleo}
    \li Find a shortest $s$-$t$ path $P$.
    \li Remove all the edges whose labels are in $L(P)$.
    \li $L_1 \leftarrow L_1 \cup L(P)$.
    \end{algleo}
\li {\bf endwhile}
\linonumber /* Stage two */
\li Denote by $R$ the current remaining graph.
    Find a minimum size $s$-$t$ cut $E' \subseteq E(R)$ of $R$.
\li $L_2 \leftarrow L(E')$.
\li {\bf return} $L_1 \cup L_2$.
\end{algleo}
\end{algz}

\begin{theorem}[\cite{GY95}]
\label{th - Short length versions of Menger's theorem}
For a simple $n$-vertex (undirected or directed) graph $G$,
if there are $k$ edge-disjoint paths between two vertices $s$ and $t$,
then the average length (number of edges) of these paths is $O(n/\sqrt{k})$.
\end{theorem}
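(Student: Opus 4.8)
The plan is to prove the slightly stronger constructive statement: one can \emph{choose} $k$ edge-disjoint $s$–$t$ paths whose \emph{total} length is $O(n\sqrt k)$, and then divide by $k$ to get the claimed average $O(n/\sqrt k)$. Some care in the choice is genuinely needed, since an arbitrary packing of $k$ edge-disjoint paths may contain needlessly long paths; the bound must come from well-chosen paths, not from any $k$ of them.

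The heart of the argument is a single distance estimate: \emph{if a simple $n$-vertex graph contains $r$ edge-disjoint $s$–$t$ paths, then $d(s,t) \le n/\sqrt r$.} I would prove this by layering the graph according to BFS distance from $s$. Write $V_j = \{v : d(s,v) = j\}$ with $n_j = |V_j|$, and let $D = d(s,t)$, so that $t \in V_D$. For each $0 \le j < D$, the edges leaving $V_{\le j}$ form an $s$–$t$ cut, and in both the undirected and the directed case every such edge runs from $V_j$ to $V_{j+1}$ (because distances change by at most one across an edge). Since the min cut is at least $r$ by Menger's theorem, there are at least $r$ such edges, and as $G$ is simple this forces $n_j\, n_{j+1} \ge r$. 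By AM–GM, $n_j + n_{j+1} \ge 2\sqrt{n_j n_{j+1}} \ge 2\sqrt r$, and summing over $j = 0,\dots,D-1$ yields $2\sqrt r\, D \le \sum_{j=0}^{D-1}(n_j + n_{j+1}) \le 2\sum_{j=0}^{D} n_j \le 2n$, hence $D \le n/\sqrt r$.

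With this estimate in hand I would extract the paths greedily. Set $G_0 = G$; at step $i$ let $P_i$ be a \emph{shortest} $s$–$t$ path in $G_{i-1}$, put $\ell_i = |P_i|$, and let $G_{i} = G_{i-1} - E(P_i)$. The paths $P_i$ are edge-disjoint by construction. Provided $G_{i-1}$ still contains at least $k-i+1$ edge-disjoint $s$–$t$ paths, the estimate applies with $r = k-i+1$ and gives $\ell_i \le n/\sqrt{k-i+1}$. Summing and reindexing, $\sum_{i=1}^{k} \ell_i \le n\sum_{m=1}^{k} m^{-1/2} \le 2n\sqrt k$, so the average length is at most $2n/\sqrt k = O(n/\sqrt k)$, which is exactly the claim; note this treats the directed and undirected cases uniformly.

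The main obstacle is justifying that the $s$–$t$ edge-connectivity drops by at most one per step, i.e. that $G_{i-1}$ still admits $k-i+1$ edge-disjoint paths. This is precisely where choosing a \emph{shortest} path is essential: deleting the edges of an arbitrary $s$–$t$ path can reduce the maximum number of edge-disjoint $s$–$t$ paths by more than one, whereas deleting a shortest one cannot. I would establish this using the distance-layer cuts above — a shortest path meets each of them exactly once, so its removal deletes only one edge from each separating layer — and convert this into the bound that the connectivity of $G_{i-1}$ is at least $k-(i-1)$. This removal property is the one nonroutine ingredient; once it is in place, the layering estimate and the harmonic summation finish the proof immediately.
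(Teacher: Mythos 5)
Your first ingredient is correct and well proven: the BFS-layering estimate that a simple $n$-vertex graph with $r$ edge-disjoint $s$-$t$ paths has $d(s,t)\le n/\sqrt r$ is exactly right (and, incidentally, it is the same layering technique this paper uses later in Lemma \ref{lm - structural lemma on multi-graphs with A, B}; the paper itself never proves Theorem \ref{th - Short length versions of Menger's theorem}, which it simply cites from [GY95]). The fatal problem is your second ingredient, the claim that deleting the edges of a \emph{shortest} $s$-$t$ path reduces the $s$-$t$ edge-connectivity by at most one. This claim is false. Counterexample (undirected, simple): take vertices $s,a,b,c,t,x_1,x_2,x_3,y_1,y_2$ and edges $s$-$a$, $a$-$b$, $b$-$c$, $c$-$t$ (the path $P$), together with $a$-$x_1$, $x_1$-$x_2$, $x_2$-$x_3$, $x_3$-$t$ and $s$-$y_1$, $y_1$-$y_2$, $y_2$-$b$. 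The paths $Q_1 = s$-$a$-$x_1$-$x_2$-$x_3$-$t$ and $Q_2 = s$-$y_1$-$y_2$-$b$-$c$-$t$ are edge-disjoint, so the $s$-$t$ connectivity is $2$; BFS shows $d(s,t)=4$ and that $P$ is the \emph{unique} shortest $s$-$t$ path; yet in $G-E(P)$ the vertex $s$ can reach only $\{s,y_1,y_2,b\}$, so the connectivity drops from $2$ to $0$. Hence your greedy procedure can stall after one path even though $k=2$, and the harmonic summation never gets off the ground.

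The reason your proposed justification does not save the claim is that the distance-layer cuts are only \emph{some} of the $s$-$t$ cuts: showing that $P$ removes one edge from each layer cut lower-bounds those particular cuts in $G-E(P)$, but the minimum cut of $G-E(P)$ need not be a layer cut. In the example above the separating set after deletion is $S=\{s,y_1,y_2,b\}$, which is not a union of BFS layers. The standard way to repair the argument (and, in essence, what Galil and Yu do) is to avoid deleting edges altogether: run successive shortest \emph{augmenting} paths in the residual graph of a unit-capacity flow. Reverse residual edges allow later paths to reroute earlier ones, so by max-flow theory the residual connectivity drops by exactly one per augmentation; your layering estimate still applies to the residual digraph (which is simple up to edge multiplicity $2$, affecting only constants); and the total length of the $k$ edge-disjoint paths obtained by decomposing the final flow is at most the sum of the augmenting-path lengths, giving $O(n\sqrt k)$. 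Note also that for the only place this paper uses the theorem (the proof of Theorem \ref{th - unweighted LstC - n}), your correct first lemma --- the distance estimate --- already suffices, since the shortest $s$-$t$ path is no longer than the average of any path collection.
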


\begin{theorem}
\label{th - unweighted LstC - n}
Algorithm \ref{alg - unweighted LstC - n} is
an $O(\frac{n^{2/3}}{\OPT^{1/3}})$-approximation algorithm
for unweighted {\sf Label $s$-$t$ Cut} on (directed or undirected) simple graphs.
\end{theorem}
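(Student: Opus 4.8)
The plan is to argue feasibility first and then bound the two pieces $L_1$ and $L_2$ of the returned solution separately, measured against a fixed optimal label cut $L^*$ with $|L^*| = \OPT$. Throughout, write $k = n^{2/3}/\OPT^{1/3}$ for the distance threshold used in the while loop; for the analysis I assume the guess in Step \ref{step - guess OPT} equals the true $\OPT$, since the algorithm can try every value in $\{1, \dots, q\}$ and return the cheapest feasible label set. Feasibility is immediate: after Stage one the remaining graph is exactly $R$ (every edge carrying a label in $L_1$ has been deleted), and removing the labels $L_2 = L(E')$ deletes in particular every edge of the $s$-$t$ cut $E'$ of $R$; hence deleting all edges labelled by $L_1 \cup L_2$ disconnects $s$ from $t$.

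To bound $|L_1|$, I would make two observations. First, because the loop runs only while the $s$-$t$ distance is at most $k$, each shortest path $P$ it finds has at most $k$ edges, so it contributes at most $|L(P)| \le k$ labels to $L_1$. Second, each iteration permanently eliminates a new label of $L^*$: since $L^*$ is a cut of $G$ it is a cut of every subgraph encountered, so the current path $P$ meets some edge labelled by $\ell \in L^*$; deleting all edges with labels in $L(P)$ removes every edge labelled $\ell$, and these edges never reappear. Thus distinct iterations consume distinct labels of $L^*$, bounding the number of iterations by $\OPT$. Together these give $|L_1| \le \OPT \cdot k = n^{2/3}\OPT^{2/3}$.

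The more substantial step is bounding $|L_2| \le |E'|$. When the loop stops, the $s$-$t$ distance in $R$ exceeds $k$, so every $s$-$t$ path of $R$ has more than $k$ edges. By Menger's theorem the minimum cut size $|E'|$ equals the maximum number $\kappa$ of edge-disjoint $s$-$t$ paths in $R$. Applying the short-length Menger bound of Theorem \ref{th - Short length versions of Menger's theorem} to these $\kappa$ edge-disjoint paths, their average length is $O(n/\sqrt{\kappa})$; but each has length greater than $k$, so $k = O(n/\sqrt{\kappa})$, which rearranges to $\kappa = O(n^2/k^2) = O(n^{2/3}\OPT^{2/3})$. Hence $|L_2| \le \kappa = O(n^{2/3}\OPT^{2/3})$ as well.

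Combining the two estimates, $|L_1 \cup L_2| \le |L_1| + |L_2| = O(n^{2/3}\OPT^{2/3})$, and dividing by $\OPT$ yields the claimed ratio $O(n^{2/3}/\OPT^{1/3})$. I expect the crux to be the Stage-two estimate: the essential point is that long edge-disjoint paths must be few, which is exactly what Theorem \ref{th - Short length versions of Menger's theorem} supplies, and the threshold $k = n^{2/3}/\OPT^{1/3}$ is chosen precisely to equalize the two bounds $\OPT \cdot k$ and $n^2/k^2$ (solving $k^3 = n^2/\OPT$). The remaining care needed is the bookkeeping around guessing $\OPT$ and confirming that each inequality uses only a subgraph relation to $L^*$, both of which are routine.
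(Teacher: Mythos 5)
Your proposal is correct and follows essentially the same route as the paper's proof: the same decomposition into the Stage-one bound $|L_1| \le \OPT \cdot n^{2/3}/\OPT^{1/3}$ (via the fact that the found paths are label-disjoint, so each iteration consumes a fresh label of the optimum) and the Stage-two bound $|L_2| \le |E'| = O(n^{2/3}\OPT^{2/3})$ obtained by combining Menger's theorem with Theorem \ref{th - Short length versions of Menger's theorem} against the distance threshold. Your treatment of feasibility and of guessing $\OPT$ is also the same bookkeeping the paper carries out, so there is nothing substantive to add.
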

\begin{proof}
First note that the algorithm obviously gives a feasible solution in polynomial time.
Every path found in stage one has length $\leq \frac{n^{2/3}}{\OPT^{1/3}}$.
All these paths are label-disjoint (i.e., no two paths of them contain the same label),
and hence edge-disjoint. It follows that the optimal solution
must select at least one label from each path,
implying that the number of these path is $\leq \OPT$.
Therefore, we get that
\begin{equation}
\label{eqn - L1 for unweighted graph}
|L_1| \leq \frac{n^{2/3}}{\OPT^{1/3}} \OPT.
\end{equation}

By Menger's theorem (see, e.g., \cite{KV12}), the size (number of edges) $|E'|$ of
the minimum $s$-$t$ cut is equal to the number edge disjoint $s$-$t$ paths
in graph $R$.
By Theorem \ref{th - Short length versions of Menger's theorem},
the average length of these paths is $O(n/\sqrt{|E'|})$.
Let $d_R(s, t)$ be the length of the shortest $s$-$t$ path in graph $R$.
When Algorithm \ref{alg - unweighted LstC - n} enters stage two,
it holds that $d_R(s, t) \geq \frac{n^{2/3}}{\OPT^{1/3}}$. We thus have
$\frac{n^{2/3}}{\OPT^{1/3}} \leq d_R(s, t) \leq O(n/\sqrt{|E'|})$,
implying that
\begin{equation}
\label{eqn - L2 for unweighted graph}
|L_2| \leq |E'| = O(n^{2/3}\OPT^{2/3}) = O\left(\frac{n^{2/3}}{\OPT^{1/3}}\right) \OPT.
\end{equation}

The theorem then follows by (\ref{eqn - L1 for unweighted graph}) and
(\ref{eqn - L2 for unweighted graph}).
\end{proof}

The guess technique in step \ref{step - guess OPT} of
Algorithm \ref{alg - unweighted LstC - n} is a common technique
in approximation algorithms. By ``guess $\OPT$'' we actually mean
repeating the algorithm for each possible value of $\OPT$ in $\{1, 2, \dots, q\}$,
and returning the solution of minimum size ever found.
The range of $\OPT$ is polynomially bounded, so the whole algorithm
still runs in polynomial time.

\section{High-level Idea of Approximating Weighted Label $s$-$t$ Cut}
\label{sec - high-level idea}
Note that in the weighted {\sf Label $s$-$t$ Cut} problem weights are defined on labels
(instead of edges). Our overall strategy dealing with weighted {\sf Label $s$-$t$ Cut},
as the case of unweighted {\sf Label $s$-$t$ Cut}, is still a two-stage algorithm.
In the first stage, we repeatedly find shortest $s$-$t$ paths and remove them.
In the second stage, we compute an $s$-$t$ cut.
However, the problem we are faced is a weighted problem, and we have to consider
the label weights on edges when we compute $s$-$t$ path and $s$-$t$ cut.
This immediately brings out a troublesome problem: For $s$-$t$ path,
label weight on an edge has to be interpreted as edge length.
However, for $s$-$t$ cut, label weight has to be interpreted as edge capacity.
How could we combine these two very different physical concepts together
in an algorithm?

Our goal is to give an approximation algorithm for weighted {\sf Label $s$-$t$ Cut}
whose ratio is given in terms of $n$ (the number of vertices of the input graph),
since there is already an $O(m^{1/2})$-approximation algorithm \cite{ZCTZ11}
for the problem. A straightforward approach is to use label weight $w_{\ell(e)}$,
for an edge $e$, as both its length and capacity in
Algorithm \ref{alg - unweighted LstC - n}. However, the proof of
Theorem \ref{th - unweighted LstC - n} does not applies to the new algorithm.
There is a result corresponding to
Theorem \ref{th - Short length versions of Menger's theorem}
for edge-weighted graph by Karger and Levine \cite[Theorem 7.1]{KL98}.
However, it also seems that this result does not help for the analysis.

Our strategy is indirect and needs some transformations.
Let us examine Algorithm \ref{alg - unweighted LstC - n} again.
This algorithm deals with an unweighted graph $G$.
Note that an unweighted edge can be viewed as a unit-weighted one.
This means unit-weight works well for not only the first stage of computing
shortest paths, but also the second stage of computing a min $s$-$t$ cut.
In other words, length and capacity ``agree'' on unit weight,
although they differ very much on arbitrary weights.

Based on the above observation, our general strategy of approximating weighted
{\sf Label $s$-$t$ Cut} is to reduce the weighted problem to the unweighted
{\sf Label $s$-$t$ Cut} problem on multi-graphs. Then, we design a two-stage
approximation algorithm for the latter problem. In the reduction we will discretize
the label weights to edge multiplicities. To guarantee that the reduction can
be performed in polynomial time, we distinguish light label weights from
heavy label weights. We guess a weight threshold $W$ (which is the maximum
label weight used in an optimal solution to weighted {\sf Label $s$-$t$ Cut}).
The label weights that are at most $W$ are called {\em light}.
Otherwise, they are called {\em heavy}.
Let $e$ be an edge with light label weight $w_{\ell(e)}$.
We round the weight $w_{\ell(e)}$ into a multiple (say ${\bar w}_{\ell(e)}$)
of some small quantity and represent $e$ by ${\bar w}_{\ell(e)}$ multi-edges,
so that the multiplicity is polynomially bounded.
On the other hand, if $e$'s label weight is heavy, we do not do the conversion.
In this way, we get a multi-graph, denoted by $\tilde G$.
Then we solve unweighted {\sf Label $s$-$t$ Cut} on multi-graph $\tilde G$.
The solution obtained will be converted back to a solution to weighted
{\sf Label $s$-$t$ Cut}.

Our method solving unweighted {\sf Label $s$-$t$ Cut} on multi-graphs
is a two-stage algorithm consisting of computing shortest $s$-$t$ paths
and $s$-$t$ cut. However, we do not want to use the edges
if their original label weights are heavy. To this aim, for an edge $e$
in multi-graph $\tilde G$, if it comes from an edge in $G$ with light
label weight, its weight is defined as one.
Otherwise (i.e., it comes from an edge in $G$ with
heavy label weight), its length is defined as zero. In this way,
we further turn $\tilde G$ to a multi-graph with edge weights zero or one.
Note that zero weight plays an important role in the algorithm.
When we measure the length of an $s$-$t$ path in the first stage,
zero weight edges are not taken into count.
More importantly, in the second stage we use a layering technique to compute
an $s$-$t$ cut. The layering technique guarantees that length-zero edges
in $\tilde G$ will not be included in the $s$-$t$ cut to be found.

To make the idea clear in conception, for an edge $e$
in multi-graph $\tilde G$, if it comes from an edge in $G$ with light
label weight, its label is called {\em admissible}.
Otherwise its label is called {\em forbidden}. So in the above,
we actually define edge weight as one if the edge's label is admissible,
and zero if forbidden.
In Section \ref{sec - LstC on multi-graphs with forbidden labels}
we show in detail the {\sf Label $s$-$t$ Cut} problem on multi-graphs
with forbidden labels, and its two-stage approximation algorithm.

\section{Label $s$-$t$ Cut on Multi-graphs with Forbidden Labels}
\label{sec - LstC on multi-graphs with forbidden labels}
In the {\sf Label $s$-$t$ Cut} problem on multi-graphs with forbidden labels,
we are given a multi-graph $G = (V, E)$, a source $s \in V$ and a sink $t \in V$,
and two label sets $A$ and $B$. Each edge in $E$ has one label in $A \cup B$.
The problem asks to find a minimum size subset $A' \subseteq A$ such that
the removal of edges with labels in $A'$ disconnects $s$ and $t$ in $G$,
or report infeasibility when no solution exists.
The labels in $A$ are called {\em admissible labels}, that is, these labels
can be used in solutions. The labels in $B$ are called {\em forbidden labels},
which cannot be used in any solution.

If an edge's label is an admissible label, then we call this edge
an admissible edge. Likewise, if an edge's label is a forbidden label,
then we call this edge a forbidden edge.

For the {\sf Label $s$-$t$ Cut} problem on multi-graphs with forbidden labels,
we design a two-stage approximation algorithm,
as shown in Algorithm \ref{alg - LstC on multi-graphs with A, B}.
The algorithm first assigns weights to edges according to the types of their labels
(step \ref{step - assign weights to edges}). Then, the algorithm
enters its first stage (step \ref{step - beginning of stage one, MG-AB} to
step \ref{step - end of stage one, MG-AB}), repeatedly finding the current shortest
$s$-$t$ paths and removing selected edges until the $s$-$t$ distance
is relatively long. After that, the algorithm enters its second stage
(step \ref{step - beginning of stage two, MG-AB} to
step \ref{step - end of stage two, MG-AB}), partitioning the vertices into layers
and then finding an $s$-$t$ cut between two consecutive layers.

\begin{algz}
\label{alg - LstC on multi-graphs with A, B}
\setcounter{algleo}{0}
\begin{algleo}
\linonumber
\linooffset {\em Input:} An instance ${\cal I} = (G, s, t, A, B)$ of
    {\sf Label $s$-$t$ Cut} on multi-graphs with forbidden labels.
\linooffset {\em Output:} A label subset $A' \subseteq A$ such that
    the removal of edges with labels in $A'$ disconnects $s$ and $t$,
    if instance $\cal I$ has feasible solutions (i.e., the removal of edges
    with admissible labels can disconnect $s$ and $t$). Returns ``failure''
    otherwise.
\li \label{step - assign weights to edges}
    Define weights (i.e., lengths) on edges of $G$:
    For every $e \in E(G)$, if $\ell(e) \in A$, then define $w(e) = 1$.
    Otherwise (i.e., $\ell(e) \in B$), define $w(e) = 0$.
\linonumber /* The length of a path $P$ is defined
    as the total weight of edges in $P$.
    The distance $\mathrm{dist}(u, v)$ between a pair of vertices $u, v$
    is defined as the length of a shortest $u$-$v$ path.
    If $v$ is not reachable from $u$, then define $\mathrm{dist}(u, v) = \infty$. */
\li \label{step - failure}
    {\bf if} $\mathrm{dist}(s, t) = 0$ {\bf then} {\bf return} ``failure''.
\linonumber /* Stage one */
\li \label{step - beginning of stage one, MG-AB}
    Guess $\OPT$.
    Let $\mu = \mu(G)$ be the maximum multiplicity of edges in $G$.
\li $A_1 \leftarrow \emptyset$.
\li {\bf while} {$\mathrm{dist}(s, t) \leq \frac{n^{2/3}\mu^{1/3}}{\OPT^{1/3}}$}
    {\bf do}
    \begin{algleo}
    \li Find a shortest $s$-$t$ path $P$.
    \li Remove all the edges whose labels are in $L(P) \cap A$.
    \li $A_1 \leftarrow A_1 \cup (L(P) \cap A)$.
    \end{algleo}
\li \label{step - end of stage one, MG-AB}
    {\bf endwhile}
\linonumber /* Stage two */
\li \label{step - beginning of stage two, MG-AB}
    Denote by $R$ the remaining graph.
    Let
    \[
    V_i := \{v \in V(R) \mid \mathrm{dist}(s, v) = i, i \neq \infty \}
    \]
    for $i = 0, 1, 2, \dots$.
    This gives a partition $\{V_0, V_1, V_2, \dots\}$ of the vertices in $R$.
    Suppose that $t \in V_\tau$ for some integer $\tau \geq 1$.
\li Let $E_i$ ($0 \leq i \leq \tau-1$) be the set of edges between
    two consecutive layers $V_i$ and $V_{i+1}$.
    Every $E_i$ forms an $s$-$t$ cut in $R$.
    Choose the minimum cardinality one among these cuts, denoted by $E'$.
\li \label{step - end of stage two, MG-AB}
    $A_2 \leftarrow L(E')$.
\li {\bf return} $A_1 \cup A_2$.
\end{algleo}
\end{algz}

Note that graph $G$ varies during the execution
of Algorithm \ref{alg - LstC on multi-graphs with A, B}.
The distance $\mathrm{dist}(u, v)$ is always the distance between $u$ and $v$
in the current graph.

\begin{lemma}
\label{lm - |A1| on multi-graphs with A, B}
For Algorithm \ref{alg - LstC on multi-graphs with A, B}, we have
$|A_{1}| \leq n^{2/3} (\frac{\mu}{\OPT})^{1/3} \OPT$.
\end{lemma}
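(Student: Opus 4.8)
The plan is to imitate the two-line counting behind Theorem~\ref{th - unweighted LstC - n}, while keeping careful track of the role played by the forbidden (weight-zero) edges and by the edge multiplicities. I would write $|A_1|$ as a product of two quantities: the number of iterations of the \textbf{while} loop, and the number of admissible labels that a single iteration can contribute to $A_1$.

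For the per-iteration bound, fix an iteration in which a shortest path $P$ is found. By the weight assignment in step~\ref{step - assign weights to edges}, every admissible edge of $P$ has weight one and every forbidden edge has weight zero, so the length $\mathrm{dist}(s,t)$ of $P$ equals exactly the number of admissible edges on $P$. The loop runs only while $\mathrm{dist}(s,t) \leq \frac{n^{2/3}\mu^{1/3}}{\OPT^{1/3}}$, so $P$ carries at most $\frac{n^{2/3}\mu^{1/3}}{\OPT^{1/3}}$ admissible edges, whence $|L(P) \cap A| \leq \frac{n^{2/3}\mu^{1/3}}{\OPT^{1/3}}$. This is the number of new labels thrown into $A_1$ in this iteration.

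For the iteration-count bound I would use an optimum-hitting argument. Let $A^{*}$ be an optimal solution, $|A^{*}| = \OPT$. In any iteration the current graph $R$ is a subgraph of $G$, so deleting the edges labelled by $A^{*}$ still disconnects $s$ and $t$ in $R$; hence the shortest path $P$ must traverse some edge whose label lies in $A^{*}$. That edge is admissible (since $A^{*} \subseteq A$), so $L(P) \cap A$ meets $A^{*}$. The crucial point is that the step ``remove all edges whose labels are in $L(P) \cap A$'' deletes \emph{every} edge bearing such a label, so no later path can reuse any label of $L(P) \cap A$. Therefore the sets $L(P) \cap A$ arising in distinct iterations are pairwise disjoint, and each consumes at least one fresh label of $A^{*}$. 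As $A^{*}$ contains only $\OPT$ labels, the loop executes at most $\OPT$ times. Multiplying the two bounds yields $|A_1| \leq \OPT \cdot \frac{n^{2/3}\mu^{1/3}}{\OPT^{1/3}} = n^{2/3}\mu^{1/3}\OPT^{2/3} = n^{2/3}(\frac{\mu}{\OPT})^{1/3}\OPT$, as claimed.

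The step deserving the most care is the iteration count. One must confirm both that every shortest path in the shrinking graph is genuinely hit by an \emph{admissible} optimal label (which relies on $A^{*}$ remaining a cut in each subgraph $R$ and on forbidden labels being excluded from $A^{*}$, so in particular $\mathrm{dist}(s,t) \geq 1$ persists on feasible instances and we never loop on an all-forbidden path) and that these hit labels are distinct across iterations (which relies on removing \emph{all} edges sharing a selected label, so that parallel multi-edges cannot let the same optimal label survive into a later path). Note that the multiplicity $\mu$ enters the analysis only through the loop's length threshold and is simply carried through the final product.
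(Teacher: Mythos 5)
Your proof is correct and follows essentially the same route as the paper's: you bound the number of admissible labels per path by the length threshold (using that admissible edges have weight one and forbidden edges weight zero), bound the number of iterations by $\OPT$ via the fact that the per-iteration label sets $L(P)\cap A$ are pairwise disjoint and each must be hit by an admissible label of the optimum, and multiply. The paper states the same argument more tersely; your extra care about $A^{*}$ remaining a cut in each subgraph $R$ and about parallel edges not resurrecting a removed label is exactly the justification the paper leaves implicit.
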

\begin{proof}
Let $h$ be the number of paths found in the first stage.
Since these $s$-$t$ paths do not share any admissible label, any optimal solution
must select at least one admissible label from each of these paths.
Therefore, we know $\OPT \geq h$.
By the definition of costs on edges, the number of admissible labels
on each such path is at most its length,
which is in turn at most $\frac{n^{2/3}\mu^{1/3}}{\OPT^{1/3}}$.
Thus, we have
\begin{equation}
\nonumber
%\label{eqn - size of A_1}
|A_{1}|
\leq \frac{n^{2/3}\mu^{1/3}}{\OPT^{1/3}} h
\leq n^{2/3} \left(\frac{\mu}{\OPT}\right)^{1/3} \OPT.
\end{equation}
\end{proof}

\begin{lemma}
\label{lm - no forbidden edge in E'}
No forbidden edge is in $E'$.
\end{lemma}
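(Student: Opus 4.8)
The plan is to exploit the $0/1$ weighting assigned in step \ref{step - assign weights to edges}: forbidden edges have length $0$ and admissible edges have length $1$. The whole argument rests on the observation that a length-$0$ edge cannot increase the $\mathrm{dist}(s,\cdot)$ value of its head, so it can never cross \emph{forward} from one layer to the next. Since $E'$ is one of the layer cuts $E_i$, it then suffices to show that no $E_i$ contains a forbidden edge.

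First I would fix an index $i$ with $0 \le i \le \tau-1$ and take an arbitrary edge $e \in E_i$. By the definition of $E_i$, the edge $e$ crosses from $V_i$ to $V_{i+1}$; writing $u$ for the endpoint (the tail, in the directed case) lying in $V_i$ and $v$ for the endpoint (the head) lying in $V_{i+1}$, we have $\mathrm{dist}(s,u) = i$ and $\mathrm{dist}(s,v) = i+1$ by the definition of the layers. Next I would apply the shortest-path relaxation inequality along $e$, namely $\mathrm{dist}(s,v) \le \mathrm{dist}(s,u) + w(e)$, which holds in both the undirected and the directed setting because $e$ is traversable from $u$ to $v$ in the remaining graph $R$. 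Substituting the layer values gives $i+1 \le i + w(e)$, hence $w(e) \ge 1$. Since $w(e) \in \{0,1\}$ by step \ref{step - assign weights to edges}, we conclude $w(e) = 1$, so $e$ is admissible rather than forbidden. As $e \in E_i$ was arbitrary, no $E_i$ contains a forbidden edge, and in particular $E' \in \{E_0, \dots, E_{\tau-1}\}$ contains none.

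The main point to be careful about is the directed case and the precise meaning of ``edges between $V_i$ and $V_{i+1}$''. For $E_i$ to be an $s$-$t$ cut it must consist of the forward-crossing edges from the $s$-side $\bigcup_{j \le i} V_j$ to its complement; the layering based on lengths guarantees that every such forward-crossing edge in fact goes from $V_i$ to exactly $V_{i+1}$, because lengths are at most $1$ and so distances increase by at most $1$ across any edge. This is what makes each $E_i$ a genuine $s$-$t$ cut and simultaneously forces each of its edges to have length $1$. Once this reading of $E_i$ is pinned down, the relaxation inequality does all the work, and no further calculation is needed.
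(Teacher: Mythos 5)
Your proof is correct and takes essentially the same route as the paper's: both rest on the observation that, because the layers are defined by distance from $s$ under the $0/1$ edge weights, a zero-weight (forbidden) edge can never cross forward between consecutive layers, hence cannot belong to any $E_i$. Your write-up is in fact slightly more careful than the paper's one-line argument, since the explicit relaxation inequality $\mathrm{dist}(s,v) \le \mathrm{dist}(s,u) + w(e)$ cleanly handles the directed case, where a forbidden edge may go backward across layers (so the paper's claim that it ``cannot lie in two different layers'' is not literally true) but never forward, which is all that is needed.
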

\begin{proof}
If $e$ is a forbidden edge, its cost is zero.
Since vertices in $R$ are layered according to their distances to $s$,
any forbidden edge cannot lie in two different layers.
By definition, $E'$ is the set of edges between two consecutive layers.
So, there is no forbidden edge in $E'$.
\end{proof}

\begin{lemma}
\label{lm - structural lemma on multi-graphs with A, B}
For Algorithm \ref{alg - LstC on multi-graphs with A, B}, we have
$|E'| = O(n^{2/3}\mu^{1/3}\OPT^{2/3})$,
where $\mu$ is the maximum multiplicity of edges in $\tilde G$.
\end{lemma}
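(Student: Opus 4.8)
The plan is to avoid the short-length Menger theorem (Theorem \ref{th - Short length versions of Menger's theorem}) altogether, since it is stated only for simple graphs and does not transfer to the multi-graph $R$; instead I would analyze the layered structure built in stage two directly. Write $n_i = |V_i|$ for the number of vertices in layer $V_i$, so that $\sum_{i=0}^{\tau} n_i \le n$. The first observation is that, because every edge between two consecutive layers must be admissible (a forbidden edge has length $0$ and therefore cannot join two different layers, exactly as in Lemma \ref{lm - no forbidden edge in E'}), the cut $E_i$ consists of parallel edges joining $V_i$ to $V_{i+1}$, whence $|E_i| \le \mu\, n_i n_{i+1}$. Since $E'$ is the smallest of the cuts $E_0, \dots, E_{\tau-1}$, we have $|E'| \le |E_i| \le \mu\, n_i n_{i+1}$ for every $i$, and hence $n_i n_{i+1} \ge |E'|/\mu$ for all $0 \le i \le \tau-1$ (the lemma is trivial if $|E'| = 0$, so assume $|E'| \ge 1$).

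The next step converts this product bound into a lower bound on $n$. From $n_i n_{i+1} \ge |E'|/\mu$ it follows that at least one of $n_i, n_{i+1}$ is $\ge \sqrt{|E'|/\mu}$; equivalently, the ``small'' layers (those with fewer than $\sqrt{|E'|/\mu}$ vertices) contain no two consecutive indices. Viewing $\{0,1,\dots,\tau\}$ as a path, the small layers form an independent set, so there are at most $\lceil (\tau+1)/2\rceil$ of them, and therefore at least $\tau/2$ of the layers $V_0, \dots, V_\tau$ have size $\ge \sqrt{|E'|/\mu}$. Summing the sizes of just these large layers gives
\[
n \ge \sum_{i=0}^{\tau} n_i \ge \frac{\tau}{2}\sqrt{\frac{|E'|}{\mu}},
\]
and rearranging yields $|E'| \le 4\mu n^2/\tau^2$.

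Finally I would feed in the stopping condition of stage one. When the while-loop terminates, the $s$-$t$ distance in $R$ satisfies $\tau = \mathrm{dist}(s,t) \ge \frac{n^{2/3}\mu^{1/3}}{\OPT^{1/3}}$, so substituting into $|E'| \le 4\mu n^2/\tau^2$ gives
\[
|E'| \le 4\mu n^2 \cdot \frac{\OPT^{2/3}}{n^{4/3}\mu^{2/3}} = 4\, n^{2/3}\mu^{1/3}\OPT^{2/3} = O(n^{2/3}\mu^{1/3}\OPT^{2/3}),
\]
as required.

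I expect the main obstacle to be the very first inequality $|E_i| \le \mu\, n_i n_{i+1}$, or rather making precise why the multiplicity $\mu$ is the right quantity to pay for: in a simple graph one would have $|E_i| \le n_i n_{i+1}$ and recover the classical $n/\sqrt{k}$ behavior, but the parallel edges force the extra factor $\mu$, and it is exactly this factor that propagates through to the $\mu^{1/3}$ in the final estimate. This layering argument is precisely what takes over the role that Menger's theorem played in the proof of Theorem \ref{th - unweighted LstC - n}. A secondary point to check carefully is that every layer $V_0,\dots,V_\tau$ is nonempty (which holds because a shortest $s$-$t$ path of length $\tau$ meets every integer distance $0,1,\dots,\tau$, as each edge changes the distance by $0$ or $1$), so that the independent-set count over $\{0,\dots,\tau\}$ is valid. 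Everything else is a routine substitution.
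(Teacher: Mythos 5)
Your proof is correct and takes essentially the same approach as the paper's: both rest on the bound $|E_i| \le \mu\,|V_i||V_{i+1}|$ between consecutive layers together with the pigeonhole observation that otherwise at least half of the $\tau+1$ layers would have to be large, exceeding the $n$ available vertices, and both then invoke the stage-one stopping condition $\tau > \frac{n^{2/3}\mu^{1/3}}{\OPT^{1/3}}$. The differences are only presentational — you argue via minimality of $E'$ over all the cuts $E_i$ and an independent-set count, yielding the clean intermediate bound $|E'| \le 4\mu n^2/\tau^2$, while the paper fixes the explicit threshold $10(n\cdot\OPT/\mu)^{1/3}$ and argues by contradiction with constant $100$.
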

\begin{proof}
When the algorithm enters stage two, we have
$d(s, t) > \frac{n^{2/3}\mu^{1/3}}{\OPT^{1/3}}$.
Therefore, there are at least
$\frac{n^{2/3}\mu^{1/3}}{\OPT^{1/3}} + 1$ different $i$'s
such that $V_i \neq \emptyset$. We omit the vertices in $G$
that are not reachable from $s$.

We claim that there exists an $i' \leq \frac{n^{2/3}\mu^{1/3}}{\OPT^{1/3}}$
such that $|V_{i'}||V_{i'+1}|\leq100(\frac{n \cdot \OPT}{\mu})^{2/3}$.
Assume for contradiction for every
$0 \leq i \leq \frac{n^{2/3}\mu^{1/3}}{\OPT^{1/3}}$,
we have $|V_i||V_{i+1}| > 100(\frac{n \cdot \OPT}{\mu})^{2/3}$.
Then for each two consecutive layers $V_i$ and $V_{i+1}$, at least
one of them has size $> 10(\frac{n \cdot \OPT}{\mu})^{1/3}$.
Therefore, there are at least $\frac12 \cdot \frac{n^{2/3}\mu^{1/3}}{\OPT^{1/3}}$
different $i$'s satisfying $|V_i| > 10(\frac{n \cdot \OPT}{\mu})^{1/3}$,
and thus there are at least $5n$ vertices in these $V_i$'s.
This is obviously absurd since there are $\leq n$ vertices in $R$.

Therefore, the number of edges $|E_{i'}|$ between $V_{i'}$ and $V_{i'+1}$
is at most $|V_{i'}||V_{i'+1}|\mu = O(n^{2/3}\OPT^{2/3}\mu^{1/3})$,
considering the maximum multiplicity of edges is $\mu$.
The lemma follows since $E'$ is the minimum cardinality one among all $E_i$'s.
%\qed
\end{proof}

\begin{theorem}
\label{th - LstC on multi-graphs with A, B}
For the unweighted {\sf Label $s$-$t$ Cut} problem on multi-graphs
with forbidden labels, if it has feasible solutions,
then Algorithm \ref{alg - LstC on multi-graphs with A, B}
returns an $O(\frac{n^{2/3}\mu^{1/3}}{\OPT^{1/3}})$-approximation
in polynomial time. Otherwise, the algorithm returns failure in polynomial
time.
\end{theorem}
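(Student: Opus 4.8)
The plan is to prove Theorem~\ref{th - LstC on multi-graphs with A, B} by combining the three lemmas that have already been established, so the main work is really to assemble the pieces and verify correctness, feasibility, and the running time. First I would dispose of the infeasibility case. Observe that step~\ref{step - failure} returns ``failure'' exactly when $\mathrm{dist}(s,t)=0$, which by the weight assignment in step~\ref{step - assign weights to edges} means there is an $s$-$t$ path using only forbidden edges. Since forbidden labels can never appear in any solution, removing all admissible labels leaves this path intact, so $s$ and $t$ remain connected and the instance genuinely has no feasible solution. Conversely, if $\mathrm{dist}(s,t)>0$ when the algorithm proceeds, I would argue that the returned set $A_1\cup A_2$ is feasible: $A_2=L(E')$ where $E'$ is one of the layer cuts $E_i$, and by Lemma~\ref{lm - no forbidden edge in E'} every edge of $E'$ is admissible, so $A_2\subseteq A$; likewise $A_1\subseteq A$ by construction. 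Removing the labels in $A_1\cup A_2$ destroys the cut $E'$ in the remaining graph $R$, and since $E'$ separates $s$ from $t$ in $R$ (it lies between two consecutive distance layers), $s$ and $t$ are disconnected in the original graph after removing all edges labelled by $A_1\cup A_2$.

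With feasibility settled, the approximation ratio follows by adding the two bounds. By Lemma~\ref{lm - |A1| on multi-graphs with A, B} we have $|A_1|\le n^{2/3}(\mu/\OPT)^{1/3}\OPT = O\!\left(\frac{n^{2/3}\mu^{1/3}}{\OPT^{1/3}}\right)\OPT$. By Lemma~\ref{lm - structural lemma on multi-graphs with A, B} we have $|A_2|\le|E'|=O(n^{2/3}\mu^{1/3}\OPT^{2/3}) = O\!\left(\frac{n^{2/3}\mu^{1/3}}{\OPT^{1/3}}\right)\OPT$. Since the returned solution has size $|A_1\cup A_2|\le|A_1|+|A_2|$, the total is $O\!\left(\frac{n^{2/3}\mu^{1/3}}{\OPT^{1/3}}\right)\OPT$, which is the claimed ratio against the optimum value $\OPT$. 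Here I would be careful to note that the bounds in the two lemmas are stated in terms of the true $\OPT$, and that the guessing scheme makes this legitimate: as in the remark following Theorem~\ref{th - unweighted LstC - n}, ``guess $\OPT$'' means running the procedure for every candidate value in a polynomially bounded range and keeping the smallest feasible solution, so the run using the correct value of $\OPT$ yields the stated guarantee.

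Finally I would verify the polynomial running time. Each iteration of the first-stage while-loop either finds a shortest $s$-$t$ path and removes at least one admissible label (shrinking $A$), or terminates; since there are at most $|A|$ admissible labels, the loop runs polynomially many times, and each shortest-path computation and edge removal is polynomial. The second stage computes a single-source distance layering and scans the consecutive-layer cuts $E_i$ for the minimum-cardinality one, which is clearly polynomial. Multiplying by the polynomially many guesses of $\OPT$ keeps the overall time polynomial, and in the infeasible case the algorithm detects $\mathrm{dist}(s,t)=0$ and returns ``failure'' in polynomial time as well.

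I do not expect a genuine obstacle, since the three lemmas have already done the quantitative work; the only point requiring care is the bookkeeping around the guess of $\OPT$ — ensuring that feasibility of the returned set is maintained for every guess while the approximation guarantee is invoked only for the correct guess — together with confirming that $E'$ truly is an $s$-$t$ cut of $R$ whose labels, once removed, disconnect $s$ from $t$ in the original multi-graph.
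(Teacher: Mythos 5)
Your proposal is correct and follows essentially the same route as the paper's proof: handle the failure case via the zero-length (all-forbidden) $s$-$t$ path, establish feasibility of $A_1 \cup A_2$ using Lemma~\ref{lm - no forbidden edge in E'}, and sum the bounds of Lemma~\ref{lm - |A1| on multi-graphs with A, B} and Lemma~\ref{lm - structural lemma on multi-graphs with A, B} to get the ratio. Your write-up is in fact somewhat more careful than the paper's on two points it leaves implicit --- that removing the labels of $E'$ genuinely disconnects $s$ and $t$ in the original multi-graph, and the bookkeeping that feasibility holds for every guess while the ratio is invoked only at the correct guess --- but these are elaborations, not a different argument.
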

\begin{proof}
First note that Algorithm \ref{alg - LstC on multi-graphs with A, B}
obviously runs in polynomial time.

Let $\cal I$ be the input instance.
If $\cal I$ has no feasible solution, then there is (at least)
one $s$-$t$ paths in $\tilde{G}$ whose edges are all forbidden edges.
That is, there is an $s$-$t$ path of length zero.
Such a path will be found at step \ref{step - failure} before the stage one.
So, in this case Algorithm \ref{alg - LstC on multi-graphs with A, B} will
return failure.

Suppose that $\tilde{\cal I}$ has feasible solutions.
The labels selected by Algorithm \ref{alg - LstC on multi-graphs with A, B}
in stage one are all admissible labels.
By Lemma \ref{lm - no forbidden edge in E'}, the labels selected
in stage two are also all admissible labels.
So, the algorithm returns a feasible solution in this case.

Let $SOL$ be the solution value of Algorithm
\ref{alg - LstC on multi-graphs with A, B}.
By Lemma \ref{lm - |A1| on multi-graphs with A, B} and
Lemma \ref{lm - structural lemma on multi-graphs with A, B},
we get that
\[
SOL = |A_{1} \cup A_2|
\leq \frac{n^{2/3}\mu^{1/3}}{\OPT^{1/3}} \OPT +
    n^{2/3}\OPT^{2/3}\mu^{1/3}
= O \left( \frac{n^{2/3}\mu^{1/3}}{\OPT^{1/3}} \right) \OPT,
\]
finishing the proof of the theorem.
%\qed
\end{proof}

\section{Weighted Label $s$-$t$ Cut}
\label{sec - weighted LstC}
Recall that In the weighted {\sf Label $s$-$t$ Cut} problem,
we are given a (directed or undirected)
graph $G = (V, E)$, a source $s \in V$, a sink $t \in V$, and a label set
$L = \{\ell_1, \ell_2, \dots, \ell_q\}$, where each label $\ell \in L$ has
a nonnegative weight $w_\ell$. Each edge $e$ in graph $G$ has a label
$\ell(e) \in L$. The goal of the problem is to find a minimum weight
label subset $L' \subseteq L$ such that $s$ and $t$ are disconnected
if the edges with labels in $L'$ are removed from $G$.
Such a label subset $L'$ is also called a label $s$-$t$ cut.
Given a label subset $L'$, its weight $w(L')$ is the sum of weights of
all labels in it.

We show how to approximate the weighted {\sf Label $s$-$t$ Cut} problem
by reducing it to the (unweighted) {\sf Label $s$-$t$ Cut} problem
on multi-graphs with forbidden labels.

\subsection{The Reduction}
\label{sec - reduce weighted LstC to LstC on multigraphs}
The overall strategy of approximating weighted {\sf Label $s$-$t$ Cut}
is to reduce the weighted problem on (simple) graphs to the unweighted problem
on multi-graphs.
Let ${\cal I} = (G, s, t, L, w)$ be an instance of weighted {\sf Label $s$-$t$ Cut}.
The resulting instance of unweighted {\sf Label $s$-$t$ Cut} will be denoted
by $\tilde{\cal{I}} = (\tilde{G}, s, t, \tilde{A}, \tilde{B})$, where $\tilde{G}$
is a multi-graph, $\tilde A$ and $\tilde B$ are two sets of labels.
For an edge $e = (u, v) \in E(G)$, there will be multiple of edges
between $u$ and $v$ in $\tilde G$. In other words, in $\tilde G$ we will use
the multiplicity of edges between $u$ and $v$ to reflect the weight of
label $\ell(e)$.

However, the label weights of instance $\cal I$ are not polynomially
bounded in general. Since the reduction have to be done in polynomial time,
the crux in the reduction is to guarantee in $\tilde G$ the edge multiplicity
between any pair of vertices is polynomially bounded, meanwhile it is yet
used to reflect label weight (which may be not polynomially bounded).
To realize this seemingly contradictory goal,
we classify the labels in $L$ into two categories, namely,
the admissible labels and the forbidden labels (which are defined in
(\ref{eqn - A, set of admissible labels}) and
(\ref{eqn - B, set of forbidden labels})).
An admissible label $\ell$ in instance $\cal I$ will be translated into
a group of labels in instance $\tilde{\cal I}$.
The weight of $\ell$ is represented by the size of the group.
On the other hand, the forbidden labels will not be used in the solution
we constructed to instance $\cal I$. Therefore, we need not to translate
forbidden labels.

Specifically, let $O \subseteq L$ be an optimal solution to instance $\cal I$,
and $W$ be the maximum weight of labels in $O$,
i.e.,
\[
W = \max \{w_\ell \mid \ell \in O\}.
\]
By definition, it obviously holds that
\begin{equation}
\label{eqn - W <= OPT(I)}
W \leq \OPT_w({\cal I}).
\end{equation}
Here we use $\OPT_w({\cal I})$ to denote the optimum of
the weighted {\sf Label $s$-$t$ Cut} problem on instance $\cal I$.

By the guess technique, we may assume that we know the value $|O|$ and $W$,
where $|O|$ is the number of labels used in the optimal solution $O$.

Define
\begin{eqnarray}
\label{eqn - A, set of admissible labels}
A &=& \{\ell \in L \mid w_\ell \leq W\}, \\
\label{eqn - B, set of forbidden labels}
B &=& L \setminus A.
\end{eqnarray}
The labels in $A$ are called {\em admissible labels},
while the labels in $B$ are called {\em forbidden labels}.

For each admissible label $\ell \in A$, we define
\begin{equation}
\label{eqn - tilde w_l}
\bar{w}_\ell = \left\lceil \frac{w_\ell |O|}{W} \right\rceil.
\end{equation}
It is easy to see that $\bar{w}_\ell$ is an integer whose value is at least one.
Since $w_\ell \leq W$ for each $\ell \in A$, we know that $\bar{w}_\ell$
is at most $|O|$, which is polynomially bounded. This is a key property
about $\bar{w}_\ell$.

The guess technique is actually to try every possible values of
the quantity to be guessed. For the quantity $|O|$, we will try
its each possible values from $1$ to $q$. So, the guess for $|O|$
can be done in polynomial time and each guess of $|O|$ is polynomially
bounded. For the value $W$, we will try each distinct label weight
in $\{w_\ell \mid \ell \in L\}$. So, the guess for $W$ can also be done
in polynomial time.

Now we are ready to construct the instance
$\tilde{\cal{I}} = (\tilde{G}, s, t, \tilde{A}, \tilde{B})$.
Initially, the label set $\tilde{A}$ is empty, and graph $\tilde{G}$
contains all vertices in $V(G)$ and no edges.
Then, for each admissible label $\ell \in A$, we put $\bar{w}_\ell$ copies
of $\ell$ into $\tilde{A}$.
Let $\ell^{(1)}, \ell^{(2)}, \dots, \ell^{(\bar{w}_\ell)}$ be these copies.
For the sake of convenience, given an admissible label $\ell \in A$, we define
\[
g(\ell) = \{\ell^{(1)}, \ell^{(2)}, \dots, \ell^{(\bar{w}_\ell)}\}
\]
as the label group associated with $\ell$.
The label set $\tilde{B}$ is just equal to $B$. For the instance $\tilde{\cal I}$,
the labels in $\tilde{A}$ are admissible labels,
and the labels in $\tilde B$ are forbidden labels.

Let $e = (u, v)$ be an edge in graph $G$ with label $\ell = \ell(e)$.
If $\ell$ is an admissible label in $A$, we put $\bar{w}_\ell$ edges between $u$ and $v$
in graph $\tilde{G}$, with the $i$-th copy ($1 \leq i \leq \bar{w}_\ell$)
of the edge labeled with $\ell^{(i)}$. Otherwise ($\ell$ is a forbidden label in $B$),
we just put $e$ itself in $\tilde{G}$.
The construction of instance $\tilde{\cal{I}}$ is finished.
Note that $\tilde{\cal I}$ is just the instance $\cal I$
in Section \ref{sec - LstC on multi-graphs with forbidden labels}.

Here is the algorithm for weighted {\sf Label $s$-$t$ Cut}.

%\begin{center}
%\fbox{
%\parbox{0.9\textwidth}{
%}
%}
%\end{center}

\begin{algz}
\label{alg - weighted LstC}
\setcounter{algleo}{0}
\begin{algleo}
\linonumber
\linooffset {\em Input:} An instance ${\cal I} = (G, s, t, L, w)$ of weighted
    {\sf Label $s$-$t$ Cut}.
\linooffset {\em Output:} A label subset $L' \subseteq L$.
\li Guess $|O|$ and $W$.
\li Construct an instance $\tilde{\cal I} = (\tilde{G}, s, t, \tilde{A}, \tilde{B})$
    of {\sf Label $s$-$t$ Cut} on multi-graphs as stated in
    Section \ref{sec - reduce weighted LstC to LstC on multigraphs}.
\li {\bf call} Algorithm \ref{alg - LstC on multi-graphs with A, B}
    on instance $\tilde{\cal I}$, obtaining a solution $\tilde{A}' \subseteq \tilde{A}$.
\li \label{step - convert tilde sol to sol}
    {\bf return} $L' \leftarrow \{\ell \in A \mid g(\ell) \subseteq \tilde{A}'\}$.
\end{algleo}
\end{algz}

Algorithm \ref{alg - weighted LstC} converts solution $\tilde{A}'$
to a label subset $L' \subseteq L$ in its step \ref{step - convert tilde sol to sol}
in the following way. Let $\ell^{(i)}$ be a label appeared in $\tilde{A}'$.
By the reduction, there is a label group $g(\ell) \subseteq \tilde{A}$
created from $\ell$, to which $\ell^{(i)}$ belongs. If $g(\ell)$ appears
{\em completely} in $\tilde{A}'$ (i.e., $g(\ell) \subseteq \tilde{A}'$),
we put label $\ell$ into $L'$. Note that $group_\ell$ may appear only
{\em partially} in $\tilde{A}'$. In this case we do not put label $\ell$ into $L'$.

\subsection{Analysis}
Let $\cal I$ be an instance of weighted {\sf Label $s$-$t$ Cut},
and $\tilde{\cal I}$ be the instance of {\sf Label $s$-$t$ Cut} on multi-graphs
created from $\cal I$ by Algorithm \ref{alg - weighted LstC}.
We now analyze the approximation ratio of Algorithm \ref{alg - weighted LstC}.
First we show the relationship between the optimum of instance $\tilde{\cal I}$,
denoted by $\OPT_m(\tilde{\cal I})$, and the optimum of instance $\cal I$,
denoted by $\OPT_w({\cal I})$. For clarity, we use subscript ``m''
in $\OPT_m()$ to specify it is the optimum of some instance of the
{\sf Label $s$-$t$ Cut} problem on multi-graphs with forbidden labels.
Similarly, we use subscript ``w'' in $\OPT_w()$ to specify it is the optimum
of some instance of the weighted {\sf Label $s$-$t$ Cut} problem.

\begin{lemma}
\label{lm - OPT(tilde I) <= func of OPT(I)}
For the correct guess of $|O|$ and $W$, we have
$\OPT_m(\tilde{\cal I}) \leq \frac{|O|}{W} \OPT_w({\cal I}) + |O|$.
\end{lemma}
\begin{proof}
By (\ref{eqn - tilde w_l}), we have $\bar{w}_\ell \leq \frac{w_\ell |O|}{W} + 1$,
i.e., $w_\ell \geq \frac{W}{|O|}(\bar{w}_\ell - 1)$. Then it follows that
\[
\OPT_w({\cal I}) = \sum_{\ell \in O} w_\ell
\geq \frac{W}{|O|} \sum_{\ell \in O} (\bar{w}_\ell - 1)
= \frac{W}{|O|} \left(\left(\sum_{\ell \in O} \bar{w}_\ell\right) - |O|\right).
\]
By replacing each label $\ell$ in $O$ with $\ell^{(1)}$, $\ell^{(2)}$, $\dots$,
$\ell^{(\bar{w}_\ell)}$ (note that all of them are in $\tilde A$,
the admissible label set of $\tilde{\cal I}$),
we can get a feasible solution to instance $\tilde{\cal I}$
with cost $\sum_{\ell \in O} \bar{w}_\ell$.
This means that $\sum_{\ell \in O} \bar{w}_\ell \geq \OPT_m(\tilde{\cal I})$.
So, we have
\[
\OPT_w({\cal I}) \geq \frac{W}{|O|} \left(\OPT_m(\tilde{\cal I}) - |O|\right),
\]
obtaining the lemma.
\end{proof}

\begin{lemma}
\label{lm - OPT(tilde I) >= func of OPT(I)}
For the correct guess of $|O|$ and $W$, we have
$\OPT_m(\tilde{\cal I}) \geq \frac{|O|}{W} \OPT_w({\cal I})$.
\end{lemma}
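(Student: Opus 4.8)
The plan is to establish the equivalent inequality $\OPT_w({\cal I}) \leq \frac{W}{|O|}\OPT_m(\tilde{\cal I})$, which rearranges to the claim. To do this, I would start from an optimal solution $\tilde{A}' \subseteq \tilde{A}$ to the multi-graph instance $\tilde{\cal I}$ (which exists, since the previous lemma already shows $\OPT_m(\tilde{\cal I})$ is finite), so that $|\tilde{A}'| = \OPT_m(\tilde{\cal I})$, and then exhibit a feasible label $s$-$t$ cut $L'$ for the original weighted instance $\cal I$ whose weight is at most $\frac{W}{|O|}|\tilde{A}'|$. The natural candidate is exactly the conversion used in step \ref{step - convert tilde sol to sol} of Algorithm \ref{alg - weighted LstC}, namely $L' = \{\ell \in A \mid g(\ell) \subseteq \tilde{A}'\}$, the set of admissible labels whose entire copy group lies in $\tilde{A}'$.

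The first and main step is to verify that this $L'$ is feasible for $\cal I$, i.e., that removing all edges with labels in $L'$ disconnects $s$ and $t$ in $G$. I would argue by contradiction via a path-lifting argument: suppose an $s$-$t$ path $P$ survives in $G$ after deleting the edges with labels in $L'$. For each edge $e$ of $P$ carrying an admissible label $\ell$, we have $\ell \notin L'$, hence $g(\ell) \not\subseteq \tilde{A}'$, so some copy $\ell^{(i)}$ is missing from $\tilde{A}'$ and the corresponding parallel copy of $e$ survives in $\tilde{G}$ after deleting $\tilde{A}'$; for each edge of $P$ carrying a forbidden label, that edge is present in $\tilde{G}$ and is untouched since $\tilde{A}' \subseteq \tilde{A}$. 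Splicing these surviving copies along $P$ yields an $s$-$t$ path in $\tilde{G}$ avoiding every label of $\tilde{A}'$, contradicting the feasibility of $\tilde{A}'$. Note that this single argument also rules out the degenerate case of an all-forbidden $s$-$t$ path in $G$, so no separate feasibility check is needed.

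Once feasibility is secured, the weight bound is a short computation. Each $\ell \in L'$ contributes its whole group $g(\ell)$, of size $\bar{w}_\ell$, to $\tilde{A}'$, and the groups $g(\ell)$ for distinct labels are pairwise disjoint, so $\sum_{\ell \in L'} \bar{w}_\ell \leq |\tilde{A}'|$. Combining this with $w_\ell \leq \frac{W}{|O|}\bar{w}_\ell$, which follows at once from $\bar{w}_\ell = \lceil w_\ell |O| / W \rceil \geq w_\ell |O| / W$ in (\ref{eqn - tilde w_l}), gives $w(L') = \sum_{\ell \in L'} w_\ell \leq \frac{W}{|O|}\sum_{\ell \in L'}\bar{w}_\ell \leq \frac{W}{|O|}\OPT_m(\tilde{\cal I})$. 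Since $L'$ is feasible for $\cal I$, we have $\OPT_w({\cal I}) \leq w(L')$, and rearranging yields $\OPT_m(\tilde{\cal I}) \geq \frac{|O|}{W}\OPT_w({\cal I})$, as desired.

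I expect the feasibility verification to be the only genuine obstacle, since it is where the \emph{complete-group} nature of the conversion interacts with the multiplicity structure of $\tilde{G}$, and one must treat admissible and forbidden edges separately when lifting $P$ back into $\tilde{G}$. The weight estimate, by contrast, is mechanical once the ceiling inequality in (\ref{eqn - tilde w_l}) is invoked and the group-disjointness observation is made.
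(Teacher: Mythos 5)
Your proof is correct and takes essentially the same route as the paper's: start from an optimal solution to $\tilde{\cal I}$, convert it back to a label subset of $L$ by collapsing complete copy groups, and bound its weight via the ceiling inequality $w_\ell \leq \frac{W}{|O|}\bar{w}_\ell$. The only difference is that you explicitly verify feasibility of the converted solution with the path-lifting argument and handle partial groups with an inequality $\sum_{\ell \in L'} \bar{w}_\ell \leq |\tilde{A}'|$, whereas the paper simply asserts feasibility and writes the corresponding step as an equality; your version is a more careful rendering of the same argument.
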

\begin{proof}
Since $W$ is a correct guess of the maximum weight of labels in $O$,
each label in $g(\ell)$ for every $\ell \in O$ is an admissible label
with respect to instance $\tilde{\cal I}$. By replacing each label $\ell$ in $O$
with $\ell^{(1)}, \ell^{(2)}, \dots, \ell^{(\bar{w}_\ell)}$,
we can get a feasible solution to instance $\tilde{\cal I}$.
That is, instance $\tilde{\cal I}$ has feasible solutions.

So, we may let $\tilde{O} \subseteq \tilde{A}$ be an optimal solution
to instance $\tilde{\cal I}$. From $\tilde{O}$, we can construct a feasible
solution $A'$ to instance $\cal I$, by replacing each label group
$\ell^{(1)}, \ell^{(2)}, \dots, \ell^{(\bar{w}_\ell)}$ in $\tilde{O}$
with label $\ell$.
By (\ref{eqn - tilde w_l}), we have $\bar{w}_\ell \geq \frac{w_\ell |O|}{W}$,
i.e., $w_\ell \leq \frac{W}{|O|}\bar{w}_\ell$.
Then it follows that
\[
\OPT_w({\cal I})
\leq w(A') = \sum_{\ell \in A'} w_\ell
\leq \frac{W}{|O|} \sum_{\ell \in A'} \bar{w}_\ell
= \frac{W}{|O|} |\tilde{O}|
= \frac{W}{|O|} \OPT_m(\tilde{\cal I}).
\]
Rearranging the terms gives the lemma.
\end{proof}

\begin{lemma}
\label{lm - weight of L'}
For the correct guess of $|O|$ and $W$, we have
\[
w(L')
= O\left(\frac{n^{2/3}\mu^{1/3}}{\OPT_m(\tilde{\cal I})^{1/3}}\right)
    \OPT_w({\cal I}).
\]
\end{lemma}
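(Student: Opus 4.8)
The plan is to bound $w(L')$ by first passing through the cardinality $|\tilde{A}'|$ of the multi-graph solution returned by Algorithm \ref{alg - LstC on multi-graphs with A, B}, and then converting that cardinality bound back into a weight bound via the rounding relation (\ref{eqn - tilde w_l}). The bridge between the two worlds is the \emph{complete-group} conversion rule used in step \ref{step - convert tilde sol to sol} of Algorithm \ref{alg - weighted LstC}: a label $\ell \in A$ enters $L'$ only when its entire group $g(\ell)$ lies inside $\tilde{A}'$. Since the groups $\{g(\ell)\}_{\ell \in A}$ are pairwise disjoint (each copy $\ell^{(i)}$ belongs to exactly one group) and each group counted in $L'$ is contained in $\tilde{A}'$, I immediately get $\sum_{\ell \in L'} \bar{w}_\ell = \sum_{\ell \in L'} |g(\ell)| \leq |\tilde{A}'|$. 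This is the crucial place where the complete-group rule pays off: the partial groups that $\tilde{A}'$ may ``waste'' do not charge anything to the weighted solution.

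First I would invoke Theorem \ref{th - LstC on multi-graphs with A, B} on instance $\tilde{\cal I}$, whose relevant optimum is $\OPT_m(\tilde{\cal I})$, to obtain $|\tilde{A}'| = O\big(\frac{n^{2/3}\mu^{1/3}}{\OPT_m(\tilde{\cal I})^{1/3}}\big)\,\OPT_m(\tilde{\cal I})$. Next, from (\ref{eqn - tilde w_l}) we have $\bar{w}_\ell \geq \frac{w_\ell |O|}{W}$, i.e. $w_\ell \leq \frac{W}{|O|}\bar{w}_\ell$, so summing over $\ell \in L'$ and using the group inequality above gives $w(L') = \sum_{\ell \in L'} w_\ell \leq \frac{W}{|O|}\sum_{\ell \in L'}\bar{w}_\ell \leq \frac{W}{|O|}|\tilde{A}'|$. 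Combining the two bounds yields $w(L') \leq O\big(\frac{n^{2/3}\mu^{1/3}}{\OPT_m(\tilde{\cal I})^{1/3}}\big)\cdot\frac{W}{|O|}\OPT_m(\tilde{\cal I})$, so it remains only to show that the scalar factor $\frac{W}{|O|}\OPT_m(\tilde{\cal I})$ is $O(\OPT_w({\cal I}))$.

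This last step is where two-sided control over $\OPT_m(\tilde{\cal I})$ is needed and is the only genuinely delicate point. By Lemma \ref{lm - OPT(tilde I) <= func of OPT(I)} we have $\OPT_m(\tilde{\cal I}) \leq \frac{|O|}{W}\OPT_w({\cal I}) + |O|$, hence $\frac{W}{|O|}\OPT_m(\tilde{\cal I}) \leq \OPT_w({\cal I}) + W$; and by (\ref{eqn - W <= OPT(I)}) we have $W \leq \OPT_w({\cal I})$, so $\frac{W}{|O|}\OPT_m(\tilde{\cal I}) \leq 2\,\OPT_w({\cal I})$. Substituting back gives exactly the claimed bound. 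The main obstacle is recognizing that the additive $+|O|$ term in Lemma \ref{lm - OPT(tilde I) <= func of OPT(I)} (an artifact of the ceiling in the discretization) is harmless: after multiplication by $\frac{W}{|O|}$ it contributes only $W$, which (\ref{eqn - W <= OPT(I)}) absorbs into the constant. Throughout I would keep in mind that the statement is conditioned on the correct guess of $|O|$ and $W$, since both Lemma \ref{lm - OPT(tilde I) <= func of OPT(I)} and (\ref{eqn - W <= OPT(I)}) rely on that guess.
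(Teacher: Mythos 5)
Your proposal is correct and follows essentially the same route as the paper's proof: both pass through the bound $\sum_{\ell \in L'} \bar{w}_\ell \leq |\tilde{A}'|$ from the complete-group conversion rule, apply Theorem \ref{th - LstC on multi-graphs with A, B} to bound $|\tilde{A}'|$, convert back to weights via $w_\ell \leq \frac{W}{|O|}\bar{w}_\ell$ from (\ref{eqn - tilde w_l}), and finally absorb the additive $|O|$ term of Lemma \ref{lm - OPT(tilde I) <= func of OPT(I)} into the constant using $W \leq \OPT_w({\cal I})$ from (\ref{eqn - W <= OPT(I)}). The only differences are cosmetic (order of the steps and your explicit constant $2$), and your remark on why partial groups cost nothing is a nice clarification of the paper's terser ``by the construction of solution $L'$.''
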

\begin{proof}
By (\ref{eqn - tilde w_l}), we have
$\bar{w}_\ell \geq \frac{w_\ell |O|}{W}$,
implying $w_\ell \leq \frac{W}{|O|}\bar{w}_\ell$.
So, we get
\begin{equation}
\label{eqn - w(L') part 1}
w(L') = \sum_{\ell \in L'} w_\ell
\leq \frac{W}{|O|} \sum_{\ell \in L'}\bar{w}_\ell.
\end{equation}

By the construction of solution $L'$, we know
$\sum_{\ell \in L'}\bar{w}_\ell \leq |\tilde{A}'|$.
By Theorem \ref{th - LstC on multi-graphs with A, B},
$|\tilde{A}'| \leq \rho(\tilde{\cal I}) \OPT(\tilde{\cal I})$,
where
$\rho(\tilde{\cal I}) =
    O(\frac{n^{2/3}\mu(\tilde{G})^{1/3}}
    {\OPT_m(\tilde{\cal I})^{1/3}})$,
$n$ is the number of vertices in $\tilde G$,
and $\mu(\tilde{G})$ is the the maximum edge multiplicity of $\tilde G$.
Note that the multi-graph $\tilde G$ has the same number of vertices
as in $G$. That is, we have
\begin{equation}
\label{eqn - w(L') part 2}
\sum_{\ell \in L'}\bar{w}_\ell
\leq \rho(\tilde{\cal I}) \OPT_m(\tilde{\cal I}).
\end{equation}

By (\ref{eqn - w(L') part 1}), (\ref{eqn - w(L') part 2}), and
Lemma \ref{lm - OPT(tilde I) <= func of OPT(I)}, we have
\begin{eqnarray*}
w(L') &\leq& \frac{W}{|O|} \rho(\tilde{\cal I})
    \left(\frac{|O|}{W} \OPT_w({\cal I}) + |O|\right) \\
&=& \rho(\tilde{\cal I}) \OPT_w({\cal I}) + \rho(\tilde{\cal I}) W \\
&\stackrel[\text{(\ref{eqn - W <= OPT(I)})}]{}{=}&
    O\left(\frac{n^{2/3}\mu(\tilde{G})^{1/3}}{\OPT_m(\tilde{\cal I})^{1/3}}\right)
    \OPT_w({\cal I}).
\end{eqnarray*}
\end{proof}

\begin{theorem}
The weighted {\sf Label $s$-$t$ Cut} problem can be approximated within $O(n^{2/3})$.
\end{theorem}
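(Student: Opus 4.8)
The plan is to read off the final ratio directly from Lemma \ref{lm - weight of L'}, whose bound
\[
w(L') = O\!\left(\frac{n^{2/3}\mu^{1/3}}{\OPT_m(\tilde{\cal I})^{1/3}}\right)\OPT_w({\cal I})
\]
already isolates the only problematic factor, namely $\mu^{1/3}/\OPT_m(\tilde{\cal I})^{1/3}$. My goal is to show that this factor is $O(1)$ for the correct guess of $|O|$ and $W$, which immediately collapses the ratio to $O(n^{2/3})$.

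First I would record two facts for the correct guess. From (\ref{eqn - tilde w_l}) together with the admissibility condition $w_\ell \le W$, every multiplicity obeys $\bar w_\ell = \lceil w_\ell |O|/W\rceil \le |O|$; since the input graph $G$ is simple, between any pair of vertices $\tilde G$ carries at most $\bar w_\ell \le |O|$ parallel edges, so $\mu = \mu(\tilde G) \le |O|$. Next, combining Lemma \ref{lm - OPT(tilde I) >= func of OPT(I)} with (\ref{eqn - W <= OPT(I)}), which yields $\OPT_w({\cal I})/W \ge 1$, gives
\[
\OPT_m(\tilde{\cal I}) \;\ge\; \frac{|O|}{W}\,\OPT_w({\cal I}) \;=\; |O|\cdot\frac{\OPT_w({\cal I})}{W} \;\ge\; |O|.
\]
Chaining these two facts produces $\mu \le |O| \le \OPT_m(\tilde{\cal I})$, hence $\mu^{1/3}/\OPT_m(\tilde{\cal I})^{1/3} \le 1$. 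Substituting into Lemma \ref{lm - weight of L'} then gives $w(L') = O(n^{2/3})\,\OPT_w({\cal I})$ for the correct guess.

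Finally I would close the argument on the algorithmic side. Algorithm \ref{alg - weighted LstC} enumerates all $O(q^2)$ guesses $|O| \in \{1,\dots,q\}$ and $W \in \{w_\ell \mid \ell \in L\}$ and returns the minimum-weight feasible $L'$ found, so it suffices that \emph{some} guess is correct. For that guess, instance $\tilde{\cal I}$ is feasible (as shown inside the proof of Lemma \ref{lm - OPT(tilde I) >= func of OPT(I)}, by replacing each $\ell \in O$ with its group $g(\ell) \subseteq \tilde A$), so Algorithm \ref{alg - LstC on multi-graphs with A, B} returns a genuine solution rather than reporting failure, and the bound above applies. Returning the best $L'$ over all guesses can only improve this, so the overall ratio is $O(n^{2/3})$, and the whole procedure runs in polynomial time since the guess ranges are polynomially bounded and each $\bar w_\ell \le |O|$ keeps $\tilde G$ polynomial in size.

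I do not expect a genuine obstacle, since the three lemmas of this section carry the weight of the argument; the one step worth stating carefully is the cancellation $\mu \le |O| \le \OPT_m(\tilde{\cal I})$. This is precisely where the design pays off: the discretization in (\ref{eqn - tilde w_l}) was chosen so that inflated edge multiplicities never exceed $|O|$, while the multi-graph optimum is itself at least $|O|$, so the $\mu^{1/3}$ penalty appearing in the multi-graph ratio of Theorem \ref{th - LstC on multi-graphs with A, B} is absorbed rather than paid.
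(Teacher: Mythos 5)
Your proposal is correct and follows essentially the same route as the paper: both start from Lemma \ref{lm - weight of L'}, bound $\mu(\tilde G)\le|O|$ via (\ref{eqn - tilde w_l}), and kill the factor $\bigl(\mu/\OPT_m(\tilde{\cal I})\bigr)^{1/3}$ using Lemma \ref{lm - OPT(tilde I) >= func of OPT(I)} together with $W\le\OPT_w({\cal I})$ from (\ref{eqn - W <= OPT(I)}). The only difference is presentational --- you chain $\mu\le|O|\le\OPT_m(\tilde{\cal I})$ while the paper writes the equivalent cancellation $|O|/\OPT_m(\tilde{\cal I})\le W/\OPT_w({\cal I})\le 1$ --- and your explicit remarks on guessing and feasibility are consistent with what the paper leaves implicit.
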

\begin{proof}
By (\ref{eqn - tilde w_l}), the multiplicity $\mu(\tilde G)$ of multi-graph
$\tilde G$ is at most $|O|$. So, by Lemma \ref{lm - weight of L'}, the approximation ratio
of Algorithm \ref{alg - weighted LstC} is
\begin{eqnarray*}
n^{2/3}\left(\frac{\mu(\tilde{G})}{\OPT_m(\tilde{\cal I})}\right)^{1/3}
&\stackrel[\text{(\ref{eqn - tilde w_l})}]{}{\leq}&
    n^{2/3}\left(\frac{|O|}{\OPT_m(\tilde{\cal I})}\right)^{1/3} \\
&\stackrel[\text{LM\ref{lm - OPT(tilde I) >= func of OPT(I)}}]{}{\leq}&
n^{2/3}\left(\frac{W}{\OPT_w({\cal I})}\right)^{1/3}
\stackrel[\text{(\ref{eqn - W <= OPT(I)})}]{}{\leq}
    n^{2/3}.
\end{eqnarray*}

Finally, it is straightforward that Algorithm \ref{alg - weighted LstC}
runs in polynomial time. The theorem follows.
\end{proof}

\section{Concluding Remarks}
\label{sec - conclusions}
In this paper, we design an approximation algorithm for the weighted
{\sf Label $s$-$t$ Cut} problem whose ratio is $O(n^{2/3})$, where
$n$ is the number of vertices in the input graph.
An interesting open problem for {\sf Label $s$-$t$ Cut} is whether
the exponent of the ratio $O(n^{2/3})$ can be improved.
For example, is an $O(n^{1/2})$-approximation for (weighted or unweighted)
{\sf Label $s$-$t$ Cut} possible?
Note that the current best approximation hardness factor for unweighted
{\sf Label $s$-$t$ Cut} is $2^{(\log n)^{1-1/(\log\log n)^c}}$
for any constant $c < 1/2$ \cite{ZCTZ11}.

The approximation result $O(n^{2/3})$ for weighted {\sf Label $s$-$t$ Cut}
applies to the weighted {\sf Global Label Cut} problem as well.
For the {\sf Global Label Cut} problem, a long-term open problem is whether
it is in P or NP-hard.

\section*{Acknowledgements}
Peng Zhang is supported by the National Natural Science Foundation of China
(61972228, 61672323, 61672328), and the Natural Science Foundation of Shandong Province,
China (ZR2019MF072, ZR2016AM28).

%% The Appendices part is started with the command \appendix;
%% appendix sections are then done as normal sections
%% \appendix

%% \section{}
%% \label{}

\bibliographystyle{plain}
\bibliography{lcbib}
\end{document}